\newcommand*{\email}[1]{\href{mailto:#1}{\nolinkurl{#1}} }
\title{Approximating the Minimal Lookahead\\ Needed to Win Infinite Games}
\author{
\textbf{Martin Zimmermann}\\ Aalborg University, Aalborg, Denmark\thanks{This work was prepared while the author was affiliated with the University of Liverpool.}\\ \email{mzi@cs.aau.dk}
}
\date{\vspace{-4ex}}
\newtheorem{proposition}[]{Proposition}
\newtheorem{remark}[]{Remark}
\newtheorem{theorem}[]{Theorem}
\newtheorem{lemma}[]{Lemma}
\newcommand{\myquot}[1]{``#1''}
\newcommand{\nats}{\mathbb{N}}
\renewcommand{\epsilon}{\varepsilon}
\newcommand{\set}[1]{\{#1\}}
\newcommand{\pow}[1]{2^{#1}}
\newcommand{\aut}{\mathfrak{A}}
\newcommand{\C}{\mathcal{T}}
\renewcommand{\P}{\mathcal{P}}
\renewcommand{\r}{r}
\newcommand{\R}{\mathfrak{R}}
\newcommand{\dom}{\mathrm{dom}}
\newcommand{\opt}{\mathrm{opt}}
\newcommand{\col}{\Omega}
\newcommand{\initmark}{\iota}
\newcommand{\cdelaygame}[2]{\Gamma\!_{#1}(#2)}
\newcommand{\SigmaI}{\Sigma_I}
\newcommand{\SigmaO}{\Sigma_O}
\newcommand{\game}{\mathcal{G}}
\newcommand{\exptime}{\textsc{ExpTime}}
\newcommand{\bigo}{\mathcal{O}}
\begin{document}

\maketitle
\begin{abstract}
We present an exponential-time algorithm approximating the minimal lookahead necessary to win an $\omega$-regular delay game. 
\end{abstract}

\section{Introduction}
\label{sec:intro}
Games can be found in the standard toolkit for many areas of theoretical computer science and mathematics, e.g., set theory and logic, automata theory, and complexity theory. 
Here, we are concerned with two-player zero-sum games of infinite duration and perfect information.
In their most abstract form, these are known as Gale-Stewart games~\cite{GS}, played between Player~$I$ and Player~$O$ in rounds~$i \in \nats$: In each round~$i$, first Player~$I$ picks some letter~$a_i$ from an alphabet~$\Sigma_I$, then Player~$O$ picks a letter~$b_i$ from an alphabet~$\SigmaO$. 
Thus, after $\omega$ rounds, they have produced an infinite sequence~$w = \binom{a_0}{b_0}\binom{a_1}{b_1}\binom{a_2}{b_2}\cdots$ of letters. 
Now, Player~$O$ wins such a play, if $w$ satisfies some given winning condition, e.g., $w \in L$ for some given set~$L$ of infinite words. 
In this work, we only consider $\omega$-regular~$L$ given by deterministic parity automata.

Note that here both players move in strict alternation and the automaton will process the sequence in this order. 
Hosch and Landweber introduced delay games, relaxing this rigid interaction by allowing Player~$O$ to delay her moves in order to obtain a lookahead on Player~$I$'s moves~\cite{HoschL72}.\footnote{We refer to the  introduction of~\cite{HoltmannKaiserThomas12} for a discussion of the history of delay games, including motivation and a connection to the uniformization of $\omega$-regular relations by continous functions.}

Hosch and Landweber proved that it is decidable whether Player~$O$ wins an $\omega$-regular delay game with some bounded lookahead.
In later work, Holtmann, Kaiser, and Thomas showed that such bounded lookahead is sufficient in the following sense: in an $\omega$-regular delay game, Player~$O$ either wins with doubly-exponential lookahead or not at all (not even with unbounded lookahead)~\cite{HoltmannKaiserThomas12}.
In subsequent work, an improved exponential upper bound and matching lower bounds have been proved~\cite{KleinZimmermann16}. 

Here, we consider the problem of determining the minimal lookahead that is sufficient for Player~$O$ to win an $\omega$-regular delay game. 
It is trivial to determine this value in doubly-exponential-time by hardcoding the exponential lookahead into the game, thereby turning the delay game into a classical, i.e., delay-free, game (see~\cite{WinterZimmermann20}, Section 3.1 for details).
As the resulting games can be solved in doubly-exponential time, one obtains the minimal lookahead in doubly-exponential time by exhaustive search.
However, this has to be contrasted with the $\exptime$-hardness of determining whether Player~$O$ wins with some lookahead~\cite{KleinZimmermann16}, the only known lower bound on the complexity of the optimization problem.

We present the first improvement over the naive algorithm for the lookahead optimization problem by presenting an exponential-time algorithm approximating the minimal lookahead within a factor of two. 
To this end, we show that the exponential-time algorithm for determining whether Player~$O$ wins for some lookahead can be refined into an approximation algorithm with an exponential running time. 
Due to the hardness result for the related decision problem, this is the best running time one can hope for (barring major surprises in complexity theory). 

\section{$\omega$-regular Delay Games}
\label{sec:delaygames}
Given an alphabet~$\Sigma$, i.e., a non-empty finite set of letters, $\Sigma^*$ and $\Sigma^\omega$ denote the set of finite respectively infinite words over $\Sigma$. 
Given a product alphabet~$\SigmaI \times \SigmaO$ we write $\binom{a_0a_1a_2 \cdots}{b_0b_1b_2\cdots}$ for the word~$\binom{a_0}{b_0}\binom{a_1}{b_1}\binom{a_2}{b_2}\cdots$ with $a_i\in \SigmaI$ and $b_i \in \SigmaO$. Also, we use similar notation for finite words, provided they are of the same length.
We denote the empty word by $\epsilon$, the power set of a set~$S$ by $\pow{S}$, and the set of non-negative integers by $\nats$.

A delay game (with constant lookahead) $\cdelaygame{k}{L}$ consists of a lookahead~$k \in\nats$ and a winning condition~$L \subseteq (\SigmaI \times \SigmaO)^\omega$.
It is played in rounds~$i \in \nats$ as follows: In round~$0$, Player~$I$ picks letters~$a_0 \cdots a_{k}$ from $\SigmaI$, then Player~$O$ picks a letter~$b_0$ from $\SigmaO$.
In round~$i>0$, Player~$I$ picks a letter~$a_{k+i} \in \SigmaI$ and then Player~$O$ picks a letter~$b_i \in \SigmaO$.
After $\omega$ rounds, they have produced an outcome~$\binom{a_0}{b_0}\binom{a_1}{b_1}\binom{a_2}{b_2} \cdots$. 
We say that the outcome is winning for Player~$O$ if it is in $L$.

A strategy for Player~$O$ is a mapping~$\sigma \colon \SigmaI^* \rightarrow \SigmaO$. 
An outcome~$\binom{a_0}{b_0}\binom{a_1}{b_1}\binom{a_2}{b_2} \cdots$ is consistent with $\sigma$, if $b_i = \sigma(a_0 \cdots a_{i+k})$ for all $i$.
A strategy~$\sigma$ is winning, if every outcome that is consistent with $\sigma$ is winning for Player~$O$.
If Player~$O$ has a winning strategy for $\cdelaygame{k}{L}$, then we say she wins~$\cdelaygame{k}{L}$.

\begin{remark}[\cite{HoltmannKaiserThomas12}, Remark 3.3]
\label{remark:mono}
If Player~$O$ wins $\cdelaygame{k}{L}$, then she also wins $\cdelaygame{k'}{L}$ for every $k' > k$.
\end{remark}
 
In this work, we consider winning conditions~$L$ recognized by deterministic parity automata (DPA)~$\aut = (Q, \Sigma, q_\initmark, \delta, \col)$, where $Q$ is a finite set of states containing the initial state~$q_\initmark$, $\Sigma$ is the input alphabet, $\delta\colon Q\times \Sigma \rightarrow Q$ is the transition function, and $\col\colon Q\rightarrow \nats$ is a coloring of the states. 
As usual, we extend $\delta$ to finite words by defining~$\delta^* \colon Q \times \Sigma^* \rightarrow Q$ via $\delta^*(q,\epsilon) = q$ and $\delta^*(q,wa) = \delta(\delta^*(q,w),a)$ for all $q \in Q$, $w \in \Sigma^*$, and $a \in \Sigma$. 
Given a word~$a_0 a_1 a_2 \cdots \in \Sigma^\omega$, the run of $\aut$ on $\alpha$ is the unique sequence~$q_0q_1 q_2 \cdots$ of states given by $q_i = \delta^*(q_\initmark, a_0 \cdots a_{i-1})$.
A run $q_0q_1 q_2 \cdots $ is accepting, if the largest color appearing infinitely often in the run is even, i.e., $\limsup_{i \rightarrow \infty} \col(q_i) \bmod 2 = 0$.
The language~$L(\aut)$ recognized by $\aut$ is the set containing all words whose run is accepting.

It is known that one can determine in exponential time whether Player~$O$ wins a delay game for some lookahead~$k$, if the winning condition~$L$ is recognized by a DPA. 

\begin{proposition}[\cite{KleinZimmermann16}, Theorem 4.4]
\label{prop:complexity}
The following problem is $\exptime$-complete: Given a DPA~$\aut$, does Player~$O$ win $\cdelaygame{k}{L(\aut)}$ for some $k$?
\end{proposition}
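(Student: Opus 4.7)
The plan is to establish $\exptime$ membership and $\exptime$-hardness separately. For membership, I would construct from the DPA~$\aut$ an equivalent delay-free parity game of size exponential in~$\aut$. The essential tool is a finite-index equivalence relation on $\SigmaI^*$ that abstracts each finite block of input letters by its profile on~$\aut$: to $u \in \SigmaI^*$ one associates a mapping $P_u \colon Q \rightarrow \pow{Q \times C}$, where $C = \col(Q)$, such that $(q',c) \in P_u(q)$ iff there is a response $v \in \SigmaO^*$ of Player~$O$ with $|v| = |u|$ so that, starting from $q$, the run of $\aut$ on $\binom{u}{v}$ ends in $q'$ while the maximal color seen along the way is exactly~$c$. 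Since there are at most $2^{|Q|^2\cdot|C|}$ such profiles, the equivalence has only exponentially many classes.

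Using these profiles, I would construct a delay-free parity game whose positions track (i) the profile of Player~$I$'s currently unread buffer and (ii) the current state of~$\aut$. The dynamics mirror the delay game at a coarser granularity: Player~$I$ extends the buffer by appending letters (updating the profile by composition with a single-letter profile), while Player~$O$ responds by committing to a path through the buffer's profile, which simultaneously fixes her output choices and advances the state of~$\aut$. The colors of the resulting positions are chosen so that the parity condition of this abstract game faithfully encodes that of~$\aut$. The crucial lemma asserts that Player~$O$ wins $\cdelaygame{k}{L(\aut)}$ for some~$k$ iff she wins this abstract game. Since the game has exponentially many positions and the same number of colors as~$\aut$, it can be solved within~$\exptime$, yielding the desired upper bound.

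For hardness, I would reduce from a standard $\exptime$-complete problem, such as the acceptance problem for alternating polynomial-space Turing machines, or an $\exptime$-complete graph game. The idea is to encode a run of the machine as a sequence of input letters played by Player~$I$, including both existential and universal branching choices; Player~$O$ then uses her lookahead to verify global consistency properties (e.g., agreement between successive configurations on non-updated cells) whose checking requires simultaneous access to several configurations. The parity acceptance condition of~$\aut$ encodes acceptance of the Turing machine.

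The main obstacle is proving correctness of the abstraction: the equivalence between winning the delay game for some lookahead and winning the abstract game. The direction from abstract to concrete requires extracting an explicit lookahead bound (intuitively, bounded by the number of distinct profiles, since repetitions let Player~$O$ safely commit to her abstract choices). The converse direction requires arguing that two input blocks with identical profiles are truly interchangeable from Player~$O$'s perspective, despite the infinitary parity condition; this is exactly why the profile records not just reachable states but also the maximal color seen along each path, so that the limsup behavior of colors is preserved under substitution of equivalent blocks.
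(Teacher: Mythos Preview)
The paper does not prove this proposition; it is quoted from~\cite{KleinZimmermann16} (Theorem~4.4) as a known result and used as a black box. So there is no ``paper's own proof'' to compare against here.

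That said, your sketch is essentially the argument of the cited reference, and in fact the present paper reuses the same machinery: the functions~$\r_w^D \colon D \rightarrow \pow{Q \times C}$ defined in Section~\ref{subsec:gamedef} are precisely your profiles (restricted to a domain~$D$ and to words of a fixed length~$k$), and the abstract game~$\game_k$ is the delay-free parity game you describe, with Player~$I$ picking profiles and Player~$O$ threading a run through them. The paper even remarks explicitly that ``the construction of $\game_k$ is a refinement of a similar game used to prove Proposition~\ref{prop:complexity}''. Your correctness lemma corresponds to Lemmas~\ref{lemma:correctness:delay2nodelay} and~\ref{lemma:correctness:nodelay2delay} here (specialised to a single~$k$ versus quantifying over all~$k$), and your size and solvability estimate corresponds to Lemma~\ref{lemma:runningtime}. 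The hardness direction you outline (reduction from alternating polynomial-space machines, with Player~$O$ exploiting lookahead to check consistency between successive configurations) is also the approach taken in~\cite{KleinZimmermann16}; the present paper does not reproduce it.
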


Furthermore, there is an exponential upper bound on the  lookahead necessary to win a delay game. 

\begin{proposition}[\cite{KleinZimmermann16}, Theorem 4.8]
\label{prop:upperbound}
Let $\aut$ be a DPA with $n$ states and $c$ colors, and define $k_{\max} = 2^{n^2c+1}$. Player~$O$ wins $\cdelaygame{k}{L(\aut)}$ for some $k$ if, and only if, she wins $\cdelaygame{k_{\max}}{L(\aut)}$.
\end{proposition}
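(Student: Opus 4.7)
The ``$\Leftarrow$'' direction of the biconditional is immediate from Remark~\ref{remark:mono}, since $k_{\max}$ is a specific value of $k$. The substantive claim is the forward direction: if Player~$O$ wins $\cdelaygame{k}{L(\aut)}$ for \emph{some} $k$, then $k_{\max} = 2^{n^2c+1}$ letters of lookahead already suffice.

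My plan is to reduce to an abstract, delay-free game played over \emph{profiles} of finite input words. A profile of $u \in \SigmaI^*$ is a compact summary that records, for every $(q,q') \in Q \times Q$ and every color $i \in \{0,\ldots,c-1\}$, whether the $\aut$-run from $q$ on $u$ ends in $q'$ and, if so, whether it visits color $i$. There are at most $2^{n^2c}$ such profiles, and composition under concatenation is determined by $\delta$, so profiles form a finite monoid capturing exactly the information relevant to the parity condition. The first technical goal is then to prove that Player~$O$ wins $\cdelaygame{k}{L(\aut)}$ for some $k$ if, and only if, she wins a delay-free abstract game $\game^*$ in which Player~$I$ announces one profile per round, Player~$O$ responds with a letter in $\SigmaO$, and the play is evaluated by the parity condition on the induced run. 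The $(\Rightarrow)$ direction discretizes a concrete winning strategy into a profile-based strategy via König/compactness, while the $(\Leftarrow)$ direction is constructive and is where the lookahead bound will appear.

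The main obstacle, and the source of the specific value of $k_{\max}$, is executing the $(\Leftarrow)$ direction within exponential lookahead. The intended construction is block-by-block: Player~$O$ maintains one full ``next block'' inside her lookahead, computes its profile, consults the abstract strategy to obtain the outputs for the \emph{current} block, plays them, and then shifts by one block. A pigeonhole argument on the at most $2^{n^2c}$ profiles shows that within the first $2^{n^2c}$ letters of any incoming input there is always a nonempty prefix whose profile matches the one prescribed by the abstract strategy (a repetition among profiles of successively longer prefixes yields a usable block). Hence every block has length at most $2^{n^2c}$, and carrying one such block of lookahead in addition to the block currently being answered costs at most $2 \cdot 2^{n^2c} = 2^{n^2c+1} = k_{\max}$ letters. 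The delicate point to verify is that the concrete run satisfies the parity condition whenever the abstract run does: this reduces to checking that the maximum-color bits stored in the profiles faithfully transport the $\limsup$ of colors of the concrete run onto the abstract play, and that the initial phase (during which Player~$O$ first fills up her lookahead) does not disturb this correspondence.
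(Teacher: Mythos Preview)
Note first that the paper does not prove this proposition; it is quoted from \cite{KleinZimmermann16}. The construction of $\game_k$ in Section~\ref{subsec:gamedef} together with Lemmas~\ref{lemma:correctness:delay2nodelay} and~\ref{lemma:correctness:nodelay2delay} is, however, a parameterised version of that argument, so your high-level plan---reduce to a delay-free game over transition profiles and extract the lookahead bound from the backward direction---is the intended shape.

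Two concrete problems, though.

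\emph{The abstract game is mis-specified.} You let Player~$I$ announce a profile summarising a \emph{block} of input, while Player~$O$ replies with a single letter of $\SigmaO$. These do not fit together, and there is no well-defined ``induced run'' on which to evaluate the parity condition. In the actual construction Player~$O$'s move is a pair~$(q,c)\in Q\times C$ lying in the domain of the current profile: she commits to a target state and maximal colour for the run through that block, and only when translating back to the delay game does she produce the output letters witnessing that choice (this is exactly the content of Lemma~\ref{lemma:correctness:nodelay2delay}). Relatedly, your profile speaks of ``the $\aut$-run from $q$ on $u$'', but $u\in\SigmaI^*$ determines no unique run; the right object is the set of pairs~$(q',c')$ reachable from $q$ over all output completions, i.e., $\r_w^D$.

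\emph{The pigeonhole step does not do what you claim.} You assert that within $2^{n^2c}$ input letters there is a prefix ``whose profile matches the one prescribed by the abstract strategy''. But the abstract strategy for Player~$O$ prescribes no profile: profiles are Player~$I$'s moves, and she merely reacts to them. A repetition among prefix-profiles lets you \emph{excise} a segment without changing the overall profile---that is a shortening argument for witnesses, not a guarantee that some prescribed profile appears as a prefix. The way the bound actually arises is that the profile monoid has at most $2^{n^2c}$ elements, so the abstract game stabilises once block length reaches that threshold; one then fixes blocks of that length, feeds each block's profile (whatever it happens to be) to the abstract strategy, and keeps one additional block of lookahead, giving $2\cdot 2^{n^2c}=k_{\max}$. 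Your variable-length ``wait for the right profile'' mechanism is not needed and, as stated, does not work.
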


Finally, the exponential upper bound on the necessary lookahead is tight.

\begin{proposition}[\cite{KleinZimmermann16}, Theorem 3.2]
\label{prop:lowerbound}
For every $n >1$, there is a language~$L_n$ recognized by a DPA with $\bigo(n)$ states and two colors such that Player~$O$ wins $\cdelaygame{k}{L_n}$ for some $k$, but she does not win $\cdelaygame{2^n}{L_n}$.
\end{proposition}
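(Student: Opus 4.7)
My plan is to construct, for each $n > 1$, an explicit language $L_n \subseteq (\SigmaI \times \SigmaO)^\omega$ together with a recognizing DPA $\aut_n$ of $\bigo(n)$ states and two colors, such that winning $\cdelaygame{k}{L_n}$ requires $k > 2^n$ while remaining feasible for some larger $k$. The guiding design is a challenge--response protocol that forces Player~$O$ to commit to predictions based on input bits Player~$I$ only reveals more than $2^n$ positions later.

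\textbf{Construction.} I would define $L_n$ so that Player~$I$'s well-formed inputs are sequences of blocks, each opened by an $n$-bit index $i \in \{1, \ldots, 2^n\}$, followed by a data region of up to $2^n$ bits, and closed by a delimiter. The winning condition asks that, at the opening of each block, Player~$O$ commits a bit matching the $i$-th data bit of that block. To keep $\aut_n$ at $\bigo(n)$ states, I offload positional tracking from the automaton's state to the players' input/output alphabets: Player~$I$'s block markers serve as ``ticks'' the DPA uses for a constant-memory check, and Player~$O$ can be compelled to echo back a witness bit that the DPA compares locally. Acceptance is a B\"uchi condition (two colors) on passing the check for every block. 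Player~$I$ inputs not conforming to the block format are placed outside $L_n$, so Player~$I$ is incentivized to play well-formed words.

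\textbf{Bounds on lookahead.} The upper bound is immediate: for $k$ exceeding the maximum block length (so $k = 2^n + \bigo(n)$), Player~$O$ sees each block in full before committing and simply plays the indexed data bit. For the lower bound at $k = 2^n$, I fix an arbitrary strategy $\sigma$ of Player~$O$ and defeat it. Player~$I$ opens a block with the maximal index $i = 2^n$. At Player~$O$'s commitment time she has seen exactly $2^n + 1$ symbols of Player~$I$'s word: the $n$ index bits plus $2^n + 1 - n$ data bits, which for $n > 1$ is strictly fewer than $2^n$; hence the $i$-th, i.e.\ $(2^n)$-th, data bit has not yet been played by Player~$I$ and remains free. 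After observing Player~$O$'s commitment $b_0$, Player~$I$ plays this data bit as $1 - b_0$, contradicting the guess, and then closes the block and repeats the attack on every subsequent block. The resulting play fails the B\"uchi condition infinitely often, so $\sigma$ is not winning.

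\textbf{Main obstacle.} The crux is the joint engineering of $L_n$ and $\aut_n$: the format must be permissive enough that the DPA sidesteps any $\bigo(2^n)$-valued internal counter (via the offloading sketched above), yet tight enough that neither player can evade verification by malformed moves (Player~$I$ skipping markers, Player~$O$ producing inconsistent witness bits). Once this balance is achieved, the lower bound reduces to the clean counting argument above: with lookahead exactly $2^n$, Player~$O$ is strictly short of seeing the decisive data bit, and Player~$I$ exploits this gap to contradict every commitment.
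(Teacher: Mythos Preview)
The paper does not prove this proposition; it is quoted from \cite{KleinZimmermann16} without argument, so there is no in-paper proof to compare against. Your sketch must therefore stand on its own.

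The overall architecture is right --- a challenge--response game in which Player~$O$ must commit before the decisive input bit is revealed, with the gap stretched to~$2^n$ via a binary index --- and your lower-bound counting (at lookahead $k=2^n$, Player~$O$ has seen only $2^n+1-n < 2^n$ data bits when $n>1$) is clean once the construction is in place. Two points, however, need repair.

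First, a sign error: you place malformed Player~$I$ inputs \emph{outside} $L_n$. Since Player~$O$ wins exactly when the outcome lies in $L_n$, this hands Player~$I$ a free win by violating the format. For Player~$I$ to be forced into well-formed play, malformed inputs must lie \emph{inside}~$L_n$.

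Second, and more seriously, the $\bigo(n)$-state bound is asserted but not achieved. ``Offloading positional tracking to the alphabets'' is the right slogan, yet your sketch still requires the DPA to recognise position~$i$ within a data region of length up to $2^n$; nothing you wrote explains how a small automaton does this without counting to~$2^n$. The construction in \cite{KleinZimmermann16} resolves this by having Player~$I$ himself supply an incrementing $n$-bit binary counter as part of the input; the DPA only checks that each $n$-bit block is the binary successor of the previous one --- a local, $\bigo(n)$-state test --- and ties the critical comparison to the moment the counter reaches~$1^n$. Player~$I$ can postpone that moment for $2^n$ steps, but the DPA detects it with $\bigo(n)$ states. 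Without such a device, your automaton is implicitly counting to~$2^n$ and will not meet the size bound.
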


In this work, we consider the following problem: Given a DPA~$\aut$ over $\SigmaI \times \SigmaO$, determine the smallest $k$ such Player~$O$ wins $\cdelaygame{k}{L(\aut)}$ (or that there is no such $k$). Due to Proposition~\ref{prop:upperbound}, the search space for the smallest such $k$ is bounded by $k_{\max} = 2^{n^2c+1}$. 

Now, one can easily transform a game~$\cdelaygame{k}{L(\aut)}$ (for some fixed $k$) into an equivalent classical parity game (see, e.g.,~\cite{2001automata} for an introduction to parity games) encoding a queue of $k$ letters from $\SigmaI$ implementing the lookahead (\cite{WinterZimmermann20}, Section~3.1). 
Thus, one can construct the equivalent parity game for each $k \le k_{\max}$ and determine the smallest $k$ such that Player~$O$ wins the resulting parity game. 
This is also the smallest $k$ such that Player~$O$ wins $\cdelaygame{k}{L(\aut)}$).

However, if $k$ is exponential in the size of $\aut$ (i.e., close to $k_{\max}$), then the resulting parity game is of doubly-exponential size in the size of $\aut$, as one encodes a queue of exponential length. 
Due to Proposition~\ref{prop:lowerbound}, considering an exponential $k$ is, in general, unavoidable.
Hence, the resulting algorithm has doubly-exponential running time.

In the following, we show that the minimal lookahead can be approximated within a factor of two in exponential time.
As the related decision problem is $\exptime$-hard (Proposition~\ref{prop:complexity}), one cannot do better than exponential time (barring major surprises in complexity theory).

\section{The Algorithm}
\label{sec:algorithm}
Given a DPA~$\aut$ over $\SigmaI \times \SigmaO$ and a $k  > 0$, we define a game~$\game_k$ played between Player~$I$ and $O$ with the following properties:
\begin{enumerate}
    
    \item\label{property:delay2nodelay} If Player~$O$ wins $\cdelaygame{k}{L(\aut)}$, then she wins $\game_k$ (Lemma~\ref{lemma:correctness:delay2nodelay}).
    
    \item\label{property:nodelay2delay} If Player~$O$ wins $\game_k$, then she wins $\cdelaygame{2k-1}{L(\aut)}$ (Lemma~\ref{lemma:correctness:nodelay2delay}). 
    
    \item\label{property:complexity} Given $\aut$ and $k \le 2^{n^2c+1}$, one can construct $\game_k$  and determine its winner in exponential time in $n$, where $n$ and $c$ are the number of states and colors of $\aut$ (Lemma~\ref{lemma:runningtime}).

\end{enumerate}
Now, consider Algorithm~\ref{algorithm}.

\begin{algorithm}[h]
\begin{algorithmic}[1]
 \FOR {$k = 1 $ \TO $2^{n^2c+1}$}
 	\IF { Player~$O$ wins $\game_k$ }
 	\RETURN {$2k-1$}
 	\ENDIF
 \ENDFOR
  \RETURN {\myquot{Player~$O$ does not win for any lookahead~$k$}}
\end{algorithmic} 
\caption{Approximating the minimal lookahead necessary to win a delay game with winning condition~$L(\aut)$, where $\aut$ is a given DPA with $n$ states and $c$ colors.}
\label{algorithm}
\end{algorithm}

It is obvious that the algorithm runs in exponential time, as the calls in Line~2 can be executed in exponential time (Property~\ref{property:complexity}) and the loop terminates after an exponential number of iterations (which can obviously be reduced to a polynomial number of iterations using binary search (see Remark~\ref{remark:mono})). 

Now, fix an input $\aut$.  
If Player~$O$ does not win $\cdelaygame{k}{L(\aut)}$ for any $k$, then she does not win any $\game_k$ (Property~\ref{property:nodelay2delay}), i.e., the algorithm returns the correct output in Line~4.
So, consider the case where Player~$O$ wins $\cdelaygame{k}{L(\aut)}$ for some $k$.
Further, let $k_\opt$ be the minimal $k$ such that Player~$O$ wins $\cdelaygame{k}{L(\aut)}$. 
Proposition~\ref{prop:upperbound} yields $k_\opt \le 2^{n^2c+1}$. 
Hence, Player~$O$ wins $\game_{k}$ for some $k \le 2^{n^2c+1}$ due to Property~\ref{property:delay2nodelay}.
We pick $k^*$ minimal with this property, i.e., $2k^*-1$ is the output of the algorithm. 
Due to Property~\ref{property:nodelay2delay}, the output~$2k^*-1$ allows Player~$O$ to win the delay game with winning condition~$L(\aut)$.
Finally, the algorithm indeed approximates the minimal lookahead within a factor of two:
Due to Property~\ref{property:delay2nodelay}, we have $k^* \le k_\opt$, which implies that the approximation ratio between the algorithm's output~$2k^*-1$ and the optimal value~$k_\opt$ is indeed bounded by two:
\[
\frac{2k^*-1}{k_\opt} \le\frac{2k^*}{k_\opt} \le \frac{2k_\opt}{k_\opt} \le 2
.\]
Altogether, we obtain our main result.

\begin{theorem}
The following problem can be approximated within a factor of two in exponential time: Given a DPA~$\aut$, determine the smallest $k$ such that Player~$O$ wins $\cdelaygame{k}{L(\aut)}$.
\end{theorem}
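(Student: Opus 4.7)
The plan is to use Algorithm~\ref{algorithm} as given and derive the theorem directly from the three properties of the game~$\game_k$ (to be justified in Lemmas~\ref{lemma:correctness:delay2nodelay}, \ref{lemma:correctness:nodelay2delay}, \ref{lemma:runningtime}). For correctness I would split into two cases. If Player~$O$ does not win $\cdelaygame{k}{L(\aut)}$ for any~$k$, then by Property~\ref{property:nodelay2delay} (contrapositively) she does not win $\game_k$ for any~$k$, so the loop completes without success and the algorithm correctly outputs the ``no'' message. Otherwise, let $k_\opt$ be the minimal $k$ for which Player~$O$ wins $\cdelaygame{k}{L(\aut)}$; Proposition~\ref{prop:upperbound} gives $k_\opt \le 2^{n^2c+1}$, and Property~\ref{property:delay2nodelay} then guarantees that Player~$O$ wins $\game_{k_\opt}$. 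Hence the loop finds some minimal $k^* \le k_\opt$ with Player~$O$ winning $\game_{k^*}$ and returns $2k^*-1$, which by Property~\ref{property:nodelay2delay} is a valid lookahead.

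For the approximation ratio, the bound $k^* \le k_\opt$ immediately yields
\[
\frac{2k^*-1}{k_\opt} \le \frac{2k^*}{k_\opt} \le \frac{2k_\opt}{k_\opt} = 2,
\]
so the output approximates the optimum within a factor of two. For the running time, each iteration costs exponential time in $n$ by Property~\ref{property:complexity}, and although the naive search space is itself exponential, the monotonicity of winning in $k$ — for the delay game by Remark~\ref{remark:mono}, and inherited by $\game_k$ via Properties~\ref{property:delay2nodelay} and~\ref{property:nodelay2delay} — permits a binary search that reduces the outer loop to polynomially many iterations. Even without this refinement, exponentially many exponential-time steps remain exponential.

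The real work lies in the three deferred lemmas, and Lemma~\ref{lemma:correctness:nodelay2delay} looks like the main obstacle. One must translate a winning strategy in the delay-free game~$\game_k$ — where Player~$O$ presumably commits to some abstract response to batches of $k$ letters from Player~$I$ — back into a concrete letter-by-letter strategy in the delay game with only $2k-1$ letters of lookahead. The factor-two slack strongly suggests that $\game_k$ aggregates Player~$I$'s moves into blocks of length $k$ and that reconstructing the delay strategy requires buffering one block ahead of the one currently being answered, accounting for the $2k-1$ bound. A secondary hurdle is Property~\ref{property:complexity}: because $k$ can itself be exponential in $n$, the construction of $\game_k$ cannot literally carry a queue of length $k$, and must instead summarize blocks by some polynomial-information abstraction (for instance, transition profiles over the state space of $\aut$) so that $\game_k$ remains of exponential size.
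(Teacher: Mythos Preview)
Your proposal is correct and mirrors the paper's argument almost verbatim: both reduce the theorem to the three properties of $\game_k$, split correctness into the same two cases, derive the identical approximation-ratio chain $\frac{2k^*-1}{k_\opt}\le 2$, and defer all substance to Lemmas~\ref{lemma:correctness:delay2nodelay}--\ref{lemma:runningtime} (your speculation about block aggregation and transition-profile abstraction is also exactly what those lemmas do). One minor caveat: monotonicity of $\game_k$ in $k$ does not actually follow from Properties~\ref{property:delay2nodelay} and~\ref{property:nodelay2delay} as you claim---chaining them through Remark~\ref{remark:mono} only yields ``$\game_k$ won $\Rightarrow$ $\game_{k'}$ won for $k'\ge 2k-1$'', not for all $k'>k$---but the paper is equally terse on this point, and as you both note, binary search is inessential to the exponential-time bound.
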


Note that we do not consider the computation of a strategy realizing the approximation, as the notion of finite-state strategies for delay games comes with some technical complications~\cite{WinterZimmermann20}.

In the remainder of this section, we present the construction of $\game_k$ and prove Properties~\ref{property:delay2nodelay}, \ref{property:nodelay2delay}, and \ref{property:complexity}.

\subsection{The Game~$\game_k$}
\label{subsec:gamedef}
The construction of $\game_k$ is a refinement of a similar game used to prove Proposition~\ref{prop:complexity}~\cite{KleinZimmermann16,WinterZimmermann20}. 
The idea behind the construction is to define an equivalence relation over words in $\SigmaI^k$ that induce the same \emph{behavior} in the projection of $\aut$ to $\SigmaI$ (note that this automaton is nondeterministic due to the projection). Then, $\game_k$ is a delay-free game in which Player~$I$ picks such equivalence classes and Player~$O$ resolves the nondeterminism encoded in these classes, thereby producing a run of $\aut$. 
Finally, to account for the delay, Player~$I$ is one move ahead, i.e., he picks two equivalence classes before Player~$O$ makes her first move. 
For a detailed explanation of the construction, we refer the reader to \cite{KleinZimmermann16,WinterZimmermann20}. 

Throughout this section, we fix~$\aut = (Q, \SigmaI \times \SigmaO, q_\initmark, \delta, \col)$, some $0< k \le 2^{n^2c+1}$, and let $C = \col(Q)$  denote the set of colors of $\aut$.

First, we modify the transition function of $\aut$ so that it keeps track of the maximal color occurring along a (partial) run of $\aut$. Formally, we define $\delta_\C \colon (Q \times  C) \times (\SigmaI \times \SigmaO) \rightarrow (Q \times  C)$ for all $q \in Q$, $c \in  C$, and $\binom{a}{b} \in \SigmaI \times \SigmaO$ via
\[\delta_\C\left( (q,c),\binom{\footnotesize a}{\footnotesize b} \right) = ( q', \max\set{ c, \col(q') } ),\]
where $q' = \delta(q,\binom{a}{ b})$.

Next, we project away the $\SigmaO$-component of the letter and perform a power set construction by defining $\delta_\P \colon \pow{Q \times  C} \times \SigmaI \rightarrow \pow{Q \times  C}$ via
\[
\delta_\P(S, a) = \bigcup_{(q,c) \in S}\,\, \bigcup_{b\in \SigmaO} \delta_\C\left((q,c),\binom{\footnotesize a}{\footnotesize b}\right)
\]
for all $S \subseteq Q\times C$ and $a \in \SigmaI$.
We extend $\delta_\P$ to non-empty words via $\delta^*_\P(S, \epsilon) = S$ and $\delta^*_\P(S, wa) = \delta_\P(\delta^*_\P(S,w) ,a)$ for all $w \in \SigmaI^*$ and $a \in\SigmaI$.

Finally, for every non-empty $D \subseteq Q \times  C$ and $w \in \SigmaI^*$, we define the function~$\r_w^D \colon D\rightarrow \pow{Q \times  C} $ via
\[\r_w^D(q,c) = \delta_\P^*(\set{(q, \col(q))},w)\]
for all $(q,c) \in D$.
Note that the first argument of $\delta_\P^*$ is $(q,\col(q))$ and not $(q,c)$, the argument of $\r_w^D$, as we restart the process of keeping track of the maximal color of a run infix.
While the color~$c$ in the argument of $\r_w^D$ is therefore irrelevant, it simplifies our notation later on.

\begin{remark}
$(q',c') \in \r_w^D(q,c) $ if and only if there is a word~$w'$ over $\SigmaI \times \SigmaO$ whose projection to $\SigmaI$ is $w$ and such that the run of $\aut$ processing $w$ from $q$ leads to $q'$ and has maximal color~$c'$. 
\end{remark}

We call $w \in \SigmaI^k$ a witness for a partial function~$\r\colon Q \times  C \rightharpoonup \pow{Q \times  C}$ if we have $\r = \r^{\dom(r)}_w$, where $\dom(r)$ denotes the domain of $\r$. 
Note that we require a witness to have length~$k$.
Let $\R$ be the set of all such functions that have a witness, i.e.,
\[
\R = \set{\r^D_w \mid w \in \SigmaI^k \text{ and } D \subseteq Q \times C}.
\]

    
    
    


Now, we define $\game_k$, which is played in rounds~$i\in\nats$ between Player~$I$ and Player~$O$. In each round~$i$, Player~$I$ has to pick some $\r_i \in \R$ and then Player~$O$ picks $(q_i, c_i) \in Q \times  C$ subject to the following constraints:
\begin{itemize}
    \item For Player~$I$: $\dom(\r_0) = \set{(q_\initmark,\col(q_\initmark))}$ and  $\dom(\r_i) = \r_{i-1}(q_{i-1}, c_{i-1})$ for all $i > 0$.
    \item For Player~$O$: $(q_i,c_i) \in \dom(\r_i)$ for all $i \ge 0$. 
\end{itemize}
It is straightforward to verify that both players always have at least one move available in every round. 
A play of $\game_k$ is a sequence~$\r_0 (q_0, c_0) \r_1 (q_1, c_1) \r_2 (q_2, c_2)\cdots \in (\R \cdot (Q \times C))^\omega $. 
It is winning for Player~$O$ if the sequence of colors satisfies the parity condition, i.e., if $\limsup_{i\rightarrow \infty}c_i$ is even. 

A strategy~$\sigma$ for Player~$O$ maps a sequence~$\r_0 (q_0, c_0) \cdots \r_i$ to a pair~$(q_i, c_i) \in \dom(\r_i)$.
A play $\r_0 (q_0, c_0) \r_1 (q_1, c_1) \r_2 (q_2, c_2) \cdots$ is consistent with $\sigma$ if $(q_i, c_i) = \sigma(\r_0 (q_0, c_0) \cdots \r_i)$ for all $i \ge 0$. 
We say that $\sigma$ is a winning strategy for Player~$O$ if every play that is consistent with $\sigma$ is winning for her.
Finally, Player~$O$ wins $\game_k$ if she has a winning strategy.

\subsection{Correctness}
\label{subsec:correctness}
\begin{lemma}
\label{lemma:correctness:delay2nodelay} 
If Player~$O$ wins $\cdelaygame{k}{L(\aut)}$, then she wins $\game_k$.
\end{lemma}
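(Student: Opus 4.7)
The plan is to convert a winning strategy $\tau$ for Player~$O$ in $\cdelaygame{k}{L(\aut)}$ into a winning strategy $\sigma$ for Player~$O$ in $\game_k$ by simulation. The key interpretation is that each equivalence class $\r_i$ chosen by Player~$I$ in $\game_k$ stands for a block of $k$ consecutive $\SigmaI$-letters that Player~$I$ plays in the delay game, while the pair $(q_i,c_i)$ that Player~$O$ picks in $\game_k$ records where $\aut$'s run stands after the first $i$ such blocks (matched with Player~$O$'s outputs), together with the largest color seen on the infix from the end of block $i-1$ to the end of block $i$.

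To define $\sigma$, I would proceed as follows. Given a $\game_k$-history $\r_0 (q_0,c_0) \cdots \r_i$, the strategy fixes a witness $w_j \in \SigmaI^k$ for each $\r_j$ (any fixed choice function suffices, as every $\r_j \in \R$ has at least one witness by definition). The concatenation $w_0 w_1 \cdots w_i$ has length $(i+1)k$ and is viewed as Player~$I$'s input in $\cdelaygame{k}{L(\aut)}$; because $\tau$ requires $k$ letters of lookahead, this input determines exactly the responses $b_0,\ldots, b_{ik-1}$. The strategy returns $(q_0,c_0) = (q_\initmark, \col(q_\initmark))$ at round $0$ (as forced by $\dom(\r_0)$) and, for $i \ge 1$, returns
\[
(q_i,c_i) = \delta^*_\C\left((q_{i-1}, \col(q_{i-1})),\, \binom{w_{i-1}}{b_{(i-1)k}\cdots b_{ik-1}}\right).
\]

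The remaining work is verification. By induction on $i$ I would check that $(q_i,c_i) \in \dom(\r_i)$: unfolding the constraint on Player~$I$ together with the definition of $\r_w^D$, the set $\dom(\r_i) = \r_{i-1}(q_{i-1},c_{i-1}) = \delta^*_\P(\set{(q_{i-1},\col(q_{i-1}))}, w_{i-1})$ is precisely the collection of pairs reachable from $q_{i-1}$ by reading $w_{i-1}$ matched with some $\SigmaO$-word of length $k$, and the chosen $(q_i,c_i)$ is exactly one such pair. For the parity condition, the simulated delay-game play is consistent with the winning strategy $\tau$, so the run $p_0 p_1 p_2 \cdots$ of $\aut$ on it is accepting; since each $c_i$ is the maximal color on the infix $p_{(i-1)k} \cdots p_{ik}$, one obtains $\limsup_i c_i = \limsup_j \col(p_j)$, which is even, so the play in $\game_k$ is winning.

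The main obstacle is the bookkeeping around lengths and color resets. One must check that after $i$ rounds of $\game_k$ the $(i+1)k$ letters supplied by the witnesses are exactly enough lookahead for $\tau$ to commit to the $ik$ matching $\SigmaO$-letters that the simulation consumes, and that the convention in $\r_w^D$ of restarting the color tracker at $\col(q)$ rather than the inherited $c$ is precisely what makes the $c_i$ coincide with the infix-maxima of the run of $\aut$. Once these alignments are in place, both verifications go through by direct induction on $i$, and neither requires a genuinely combinatorial argument beyond unfolding definitions.
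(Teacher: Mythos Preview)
Your proposal is correct and follows essentially the same simulation argument as the paper: pick a witness~$w_j$ for each~$\r_j$, feed the concatenation to the delay-game strategy to obtain the output blocks, and let $(q_i,c_i)$ record the state and maximal color reached after processing $\binom{w_{i-1}}{w_{i-1}'}$ from $q_{i-1}$. The only cosmetic difference is that you package the definition via a fixed witness choice function and an explicit formula, whereas the paper unfolds the simulation round by round; the length bookkeeping and the verification that $(q_i,c_i)\in\dom(\r_i)$ and that $\limsup_i c_i$ is even are identical.
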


\begin{proof}
Let $\sigma$ be a winning strategy for Player~$O$ in $\cdelaygame{k}{L(\aut)}$. 
We construct a winning strategy~$\sigma'$ for Player~$O$ in $\game_k$, which will simulate $\sigma$.

So, let $\r_0 \in \R$ be the first move of Player~$I$.
This has to be answered by Player~$O$ by picking $(q_0,c_0) = (q_\initmark, \col(q_\initmark))$, as this is the only legal move for her.
Hence, we define $\sigma'(\r_0) = (q_\initmark, \col(q_\initmark))$.
Now, Player~$I$ picks some $\r_1 \in \R$.

We simulate this in $\cdelaygame{k}{L(\aut)}$ as follows. Pick witnesses~$w_0$ and $w_1$ for $\r_0$ and $\r_1$, respectively. 
If Player~$I$ uses $w_0 w_1$ during the first $k$ rounds of $\cdelaygame{k}{L(\aut)}$, then $\sigma$ yields $k$ letters $w_0' \in \SigmaO^k$ as response. 
Thus, we are in the following situation for $i=1$:
\begin{itemize}
    \item In $\game_k$, we have a play prefix~$\r_0 (q_0, c_0) \cdots (q_{i-1}, c_{i-1})\r_i$, and
    \item in $\cdelaygame{k}{L(\aut)}$, Player~$I$ has picked $w_0\cdots w_i$ and Player~$O$ has picked $w_0' \cdots w_{i-1}'$, where each $w_{j}$ is a witness for $\r_j$ (and thus is in $\SigmaI^k$) and each $w_j'$ is in $\SigmaO^k$.
\end{itemize}

Now, consider an arbitrary~$i \ge 1$. Let $q_i$ be the unique state of $\aut$ that is reached from $q_{i-1}$ by processing $\binom{w_{i-1}}{w_{i-1}'}$, and let $c_i$ be the maximal color on this finite run infix from $q_{i-1}$ to $q_i$. 
We have $(q_i,c_i) \in \r_{i-1}(q_{i-1},c_{i-1})$, i.e., $(q_i,c_i)$ is a legal move for Player~$O$ in $\game_k$ to extend the play prefix~$\r_0 (q_0, c_0) \cdots (q_{i-1}, c_{i-1}) \r_i$. 
Accordingly, we define $\sigma'(\r_0 (q_0, c_0) \cdots(q_{i-1}, c_{i-1}) \r_i) = (q_i,c_i)$.
Player~$I$ reacts by picking some $\r_{i+1} \in \R$, which has some witness~$w_{i+1}$. 
In $\cdelaygame{k}{L(\aut)}$ we let Player~$I$ pick the letters of $w_{i+1}$ during the next $k$ rounds, which yield $k$ letters~$w_i' \in \SigmaO^k$ determined by $\sigma$. 
Thus, we are in the above situation for $i+1$, i.e., we have concluded the definition of $\sigma'$.

It remains to show that $\sigma'$ is indeed winning.
Fix a play~$\r_0 (q_0, c_0) \r_1 (q_{1}, c_{1}) \r_2 (q_2, c_2) \cdots$ that is consistent with $\sigma'$, and let $\binom{w_0 w_1 w_2 \cdots }{w_0' w_1' w_2' \cdots }$ be the play in $\cdelaygame{k}{L(\aut)}$ constructed during the simulation. 
By construction, each $w_i$ is a witness of $\r_i$. 
An induction shows that $q_{i+1}$ is the unique state of $\aut$ reached when processing $\binom{w_i}{w_i'}$ when starting at $q_i$, and that $c_{i+1}$ is the maximal color encountered on this run infix. 
As the unique run of $\aut$ on $\binom{w_0 w_1 w_2 \cdots }{w_0' w_1' w_2' \cdots }$ is accepting (as it is, by construction, an outcome consistent with the winning strategy~$\sigma$), we conclude that $\limsup_{i\rightarrow \infty}c_i$ is even. 
Hence, the play~$\r_0 (q_0, c_0) \r_1 (q_{1}, c_{1}) \r_2 (q_2, c_2) \cdots$ is winning for Player~$O$.
As the play was chosen arbitrarily, $\sigma'$ is indeed a winning strategy for Player~$O$ in $\game_k$. 
\end{proof}

\begin{lemma}
\label{lemma:correctness:nodelay2delay} 
If Player~$O$ wins $\game_k$, then she wins  $\cdelaygame{2k-1}{L(\aut)}$.
\end{lemma}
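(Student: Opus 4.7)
The plan is to convert an assumed winning strategy~$\sigma$ for Player~$O$ in $\game_k$ into a winning strategy~$\tau$ for her in $\cdelaygame{2k-1}{L(\aut)}$ by simulating a play of $\game_k$ in which Player~$I$'s $i$-th move carries the $i$-th block of $k$ input letters as its witness.

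Concretely, I would partition Player~$I$'s input stream in $\cdelaygame{2k-1}{L(\aut)}$ into blocks~$u_0, u_1, u_2, \ldots \in \SigmaI^k$. In the simulated play of $\game_k$, let Player~$I$'s $i$-th move be $\r_i := \r_{u_i}^{D_i}$, where $D_0 = \set{(q_\initmark, \col(q_\initmark))}$ and $D_i = \r_{i-1}(q_{i-1}, c_{i-1})$ for $i > 0$; each such $\r_i$ lies in $\R$ because $u_i \in \SigmaI^k$ is a witness. Writing $(q_i, c_i) = \sigma(\r_0 (q_0, c_0) \cdots \r_i)$ for the responses dictated by~$\sigma$, the membership $(q_{i+1}, c_{i+1}) \in \r_i(q_i, c_i) = \delta_\P^*(\set{(q_i, \col(q_i))}, u_i)$ yields, by unfolding the definitions of $\delta_\P$ and $\delta_\C$, some $v_i \in \SigmaO^k$ such that processing $\binom{u_i}{v_i}$ on $\aut$ from $q_i$ reaches $q_{i+1}$ along a run whose maximal color is $c_{i+1}$. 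I would fix any such $v_i$ and let $\tau$ output the letters of $v_0 v_1 v_2 \cdots$ one at a time.

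The critical point is the lookahead accounting. To issue Player~$O$'s letter~$b_{ik}$, i.e., the first letter of~$v_i$, one must already know $v_i$; this requires $\sigma$ to have produced $(q_{i+1}, c_{i+1})$, and hence requires the blocks $u_0, \ldots, u_{i+1}$, amounting to $(i+2)k$ input letters. In $\cdelaygame{2k-1}{L(\aut)}$, just before Player~$O$ plays $b_{ik}$, Player~$I$ has committed to exactly $(2k-1) + ik + 1 = (i+2)k$ letters, so the block $u_{i+1}$ is available precisely in time. This shows that $\tau$ is well-defined and explains the factor $2k-1$ in the statement.

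It remains to verify that $\tau$ is winning. A straightforward induction on $i$ shows that the run~$p_0 p_1 p_2 \cdots$ of~$\aut$ on the outcome $\binom{u_0 u_1 u_2 \cdots}{v_0 v_1 v_2 \cdots}$ satisfies $p_{ik} = q_i$, and that $c_{i+1}$ equals the maximal color among $\col(p_{ik}), \col(p_{ik+1}), \ldots, \col(p_{(i+1)k})$. Consequently, every color that appears in $(c_i)$ infinitely often appears in the run infinitely often, and every color bound that holds eventually for $(c_i)$ transfers to the run, yielding $\limsup_{i \to \infty} c_i = \limsup_{j \to \infty} \col(p_j)$. Since $\sigma$ is winning, the left-hand side is even, so the run is accepting and the outcome lies in $L(\aut)$. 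I expect the lookahead bookkeeping to be the main obstacle to state cleanly; the existence of $v_i$ and the transfer of parity from the blockwise sequence to the full run are routine once the definitions of $\delta_\P$ and $\delta_\C$ are unwound.
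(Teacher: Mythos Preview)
Your proposal is correct and follows essentially the same approach as the paper's proof: partition Player~$I$'s input into length-$k$ blocks, use each block as the witness for the corresponding move in a simulated play of $\game_k$, extract the output block $v_i$ from the membership $(q_{i+1},c_{i+1}) \in \r_i(q_i,c_i)$, and verify that the parity of the blockwise colors $(c_i)$ coincides with that of the actual run. Your explicit lookahead accounting (showing that $(i+2)k$ input letters are available precisely when $v_i$ must begin) is a bit more detailed than the paper's presentation, but the underlying argument is the same.
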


\begin{proof}
Let $\sigma'$ be a winning strategy for Player~$O$ in $\game_k$.
We construct a winning strategy~$\sigma$ for Player~$O$ in $\cdelaygame{2k-1}{L(\aut)}$, which will simulate~$\sigma'$.

So, let Player~$I$ pick letters $ a_0\cdots  a_{2k-1}$ in round~$0$ and define $w_0 =  a_0 \cdots  a_{k-1}$ and $w_1 =  a_k \cdots  a_{2k-1}$.
Furthermore, let $(q_0, c_0) = (q_\initmark, \col(q_\initmark))$, $\r_0 = \r_{w_0}^{\set{(q_0, c_0)}}$, and $\r_1 = \r_{w_1}^{\r_0(\set{(q_0, c_0)})}$. 
Then, $\r_0 (q_0, c_0) \r_1$ is a play prefix in $\game_k$ that is trivially consistent with $\sigma'$.
Then, we are in the following situation for $i = 1$:
\begin{itemize}
    \item In $\cdelaygame{2k-1}{L(\aut)}$, Player~$I$ has picked $w_0 \cdots w_i$ and Player~$O$ has picked $w_0' \cdots w_{i-2}'$ (which is empty for $i=1$), and
    \item in $\game_k,$ we have a play prefix~$\r_0 (q_0, c_0) \cdots (q_{i-1}, c_{i-1})\r_i$ that is consistent with $\sigma'$ and where each $w_j$ is a witness for $\r_j$. Note that being a play prefix implies $(q_j, c_j) \in \dom(\r_j) = \r_{j-1}(q_{j-1}, c_{j-1})$.
\end{itemize}

Now, pick some arbitrary $i \ge 1$ and consider $(q_i, c_i) = \sigma'(\r_0 (q_0, c_0) \cdots (q_{i-1}, c_{i-1})\r_i)$.
As $\sigma'$ is a strategy for $\game_k$, we have again $(q_i, c_i) \in \dom(\r_i) = \r_{i-1}(q_{i-1},c_{i-1})$.
Furthermore, as $w_{i-1}$ is a witness for $\r_{i-1}$, there is some $w_{i-1}' \in \SigmaO^k$ such that $q_i$ is the unique state $\aut$ reaches when processing $\binom{w_{i-1}}{w_{i-1}'}$ from $q_{i-1}$, and $c_i$ is the maximal color occurring in this run infix.

Now, we define~$\sigma$ such that it picks the $k$ letters of $w_{i-1}'$ during the next $k$ rounds (independently of the choices of Player~$I$). 
During these rounds, Player~$I$ again determines some $w_{i+1} \in \SigmaI^k$, inducing $\r_{i+1} = \r_{w_{i+1}}^{\r_i(q_i, c_i)}$. 
Then, we are in the above situation for $i+1$, i.e., we have concluded the definition of $\sigma$. 

It remains to show that $\sigma$ is winning. To this end, fix an outcome~$\binom{w_0w_1w_2 \cdots}{w_0'w_1'w_2'\cdots}$ that is consistent with $\sigma$, where each $w_i$ is in $\SigmaI^k$ and each $w_i'$ is in $\SigmaO^k$.
Further, let $\r_0(q_0, c_0)\r_1 (q_1,c_1)\r_2(q_2,c_2)\cdots$ be the play of $\game_k$ constructed during the simulation.
By construction, each $w_i$ is a witness of $\r_i$.

As $\r_0(q_0, c_0)\r_1 (q_1,c_1)\r_2(q_2,c_2)\cdots$ is consistent with $\sigma'$ by construction, it is winning for Player~$O$, i.e., $\limsup_{i\rightarrow \infty}c_i$ is even. 
Now, an induction shows that $q_{i+1}$ is the unique state reached by $\aut$ when processing~$\binom{w_i}{w_i'}$ starting in $q_i$, and $c_{i+1}$ is the maximal color on this run infix.
From these two properties, we conclude that the run of $\aut$ on $\binom{w_0w_1w_2 \cdots}{w_0'w_1'w_2'\cdots}$ is accepting, i.e., the outcome is winning for Player~$O$.
As the outcome was chosen arbitrarily, $\sigma$ is indeed a winning strategy for Player~$O$ in $\cdelaygame{2k-1}{L(\aut)}$.
\end{proof}

\subsection{Running Time}
\label{subsec:runningtime}
\begin{lemma}
\label{lemma:runningtime}
Given $\aut$ and $k \le 2^{n^2c+1}$, one can construct $\game_k$  and determine its winner in exponential time in $n$, where $n$ and $c$ are the number of states and colors of $\aut$.
\end{lemma}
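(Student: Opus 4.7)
The plan is to bound the size of the arena of $\game_k$ by $2^{\bigo(n^2c)}$, to show the arena can be constructed within that time, and then to invoke a standard parity game solver.

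For the size bound, I first observe that $\r^D_w(q,c)$ depends only on $q$ and $w$, not on $c$. Hence each $\r \in \R$ is determined by a pair consisting of a domain $D \subseteq Q \times C$ (at most $2^{nc}$ choices) together with a ``function part'' $Q \to \pow{Q \times C}$ (at most $2^{n^2c}$ choices), so $\size{\R} \le 2^{\bigo(n^2c)}$. A natural encoding of $\game_k$ has Player~$I$ positions recording the prescribed new domain $D$ and Player~$O$ positions recording the $\r \in \R$ just chosen; the arena therefore has size $2^{\bigo(n^2c)}$, and the parity condition uses at most $c$ priorities inherited from $\aut$.

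To build the arena, it suffices to enumerate the ``function part'' of $\R$. Define $f_w \colon Q \to \pow{Q \times C}$ by $f_w(q) = \delta_\P^*(\set{(q,\col(q))},w)$ and let $\mathcal{F}_k = \set{f_w : w \in \SigmaI^k}$. Since $f_{wa}(q) = \delta_\P(f_w(q),a)$, the map $w \mapsto f_w$ is the trace of a deterministic transition system on functions with at most $2^{n^2c}$ states. Thus $\mathcal{F}_k$ is exactly the set of functions reachable from $f_\epsilon$ in \emph{exactly} $k$ steps of this system. The main obstacle is that $k$ may be as large as $2^{n^2c+1}$, so exact reachability cannot be computed in polynomial time. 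I would handle this with repeated squaring: compute the Boolean reachability matrix $M$ on the $2^{n^2c}$ functions and raise it to the $k$-th power via $\bigo(\log k) = \bigo(n^2c)$ squarings of an exponential-sized matrix, totalling $2^{\bigo(n^2c)}$ time. (A level-by-level BFS with $k = 2^{\bigo(n^2c)}$ rounds of $2^{\bigo(n^2c)}$ cost also works, as does exploiting eventual periodicity of the level sequence with period at most $2^{n^2c}$.) Once $\mathcal{F}_k$ is available, the transitions and priorities of $\game_k$ are immediate.

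Finally, $\game_k$ is a parity game with $2^{\bigo(n^2c)}$ positions and at most $c$ priorities. Standard algorithms (Zielonka's recursive procedure or a quasi-polynomial one) determine the winner in time polynomial in the arena size, yielding an overall running time of $2^{\bigo(n^2c)}$, i.e., exponential in $n$ as claimed.
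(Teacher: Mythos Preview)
Your argument is correct and follows the same overall strategy as the paper: bound $\size{\R}$ by $2^{\bigo(n^2c)}$, compute $\R$ in exponential time, encode $\game_k$ as an exponential-sized parity game with $c$ colors, and solve it. The one technical difference is in how $\R$ is computed. The paper builds, for each candidate partial function~$r$, a DFA (the powerset construction underlying $\delta_\P$ together with a length-$k$ counter) whose language is the set of witnesses of $r$, and checks it for emptiness. You instead view $w \mapsto f_w$ as a run of a deterministic transition system on at most $2^{n^2c}$ states and compute exact $k$-step reachability by repeated squaring of the Boolean adjacency matrix. Both are standard ways of coping with the exponential length bound on witnesses and give the same $2^{\bigo(n^2c)}$ running time; your formulation is a bit more explicit about why the large $k$ does not hurt. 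One small omission: your arena description (Player~$I$ positions as domains~$D$, Player~$O$ positions as functions~$r$) does not say where the color of the chosen pair~$(q,c)$ is recorded for the parity condition; the paper makes this explicit by taking Player~$I$'s non-initial positions to be pairs in $\R \times (Q \times C)$ colored by the $c$-component, and you should do something equivalent.
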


\begin{proof}
We argue that $\game_k$ can be expressed as an arena-based parity  game (see, e.g., \cite{2001automata} for a definition) of exponential size in $n$ with the same colors as $\aut$.
Such a game can be solved in exponential time in $n$~\cite{parity}.
Thus, it remains to argue that one can construct the parity game in exponential time.

First, we argue that for each partial function~$\r \colon Q \times C \rightarrow \pow{Q \times C}$ one can construct a deterministic finite automaton recognizing the set of witnesses of $\r$. The construction is based on a powerset construction (mirroring the definition of $\delta_\C$ and $\delta_\P$) and a counter checking that only inputs of length~$k$ are accepted. 
As there are only exponentially many such functions, one can effectively determine $\R$, i.e., the set of functions whose associated automaton has a non-empty language, in exponential time.

Now, it is straightforward to construct a parity game~$(V_I, V_O, E, v_\initmark, \col')$ in a graph~$(V_I\cup V_O, E)$ of exponential size with the following components:
\begin{itemize}
    \item $V_I = \set{v_\initmark} \cup \R\times (Q \times   C)$: vertices of Player~$I$, where $v_\initmark$ is a fresh initial vertex.
    \item $V_O = \R$: vertices of Player~$O$.
    \item $E$ is the union of the following sets of edges:
    \begin{itemize}
        \item $\set{(v_\initmark, \r) \mid \dom(r) = \set{q_\initmark, \col(q_\initmark)}}$: initial moves of Player~$I$, allowing him to pick some $\r \in \R$ with $\dom(r) = \set{q_\initmark, \col(q_\initmark)}$.
        \item $\set{((\r, (q,c)), \r') \mid \dom(\r') = \r(q,c)}$: non-initial moves of Player~$I$ allowing him to pick some $r'$ satisfying $\dom(\r') = \r(q,c)$, where $\r$ and $(q,c)$ were previously picked by the players.
        \item $\set{ (r, (r, (q,c))) \mid (q,c) \in \dom(r) }$: moves of Player~$O$ allowing her to pick some~$(q,c) \in \dom(\r)$, where $\r$ was previously picked by Player~$I$.
    \end{itemize}
    \item $\col'(v) = \begin{cases}
    c & \text{if }v = (r,(q,c)) \in V_I,\\
    \min  C & \text{otherwise.}
    \end{cases}
    $ Note that the color~$\min C$ is neutral in the following sense: Whether a play is winning or not only depends on the colors of the vertices in $V_I\setminus\set{v_\initmark}$, but not on vertices in $V_O\cup\set{v_\initmark}$.  
\end{itemize}
The resulting parity game implements exactly the rules of the abstract game~$\game_k$ and is therefore won by Player~$O$ if and only if she wins $\game_k$. 
\end{proof}

\section{Conclusion}
\label{sec:conc}
We have presented an exponential-time algorithm approximating the minimal lookahead necessary to win a delay game. 
Here, we only considered the case of $\omega$-regular winning conditions given by deterministic parity automata.

In the literature, several other types of winning conditions have been considered, e.g., quantitative parity~\cite{Zimmermann17} and (quantitative) Linear Temporal Logic~\cite{KleinZ16}. 
For these types, one can also exhibit an approximation algorithm for the minimal lookahead that has the same complexity as an algorithm deciding the existence of some lookahead using techniques very similar to those introduced here.

Unfortunately, the complexity of the exact optimization problem for games with winning conditions given by deterministic parity automata remains open. 
Let us conclude by mentioning another open problem on delay games: There is an exponential gap between the upper and lower bounds on the necessary lookahead in delay games with winning conditions given by deterministic Muller automata. 
The same is true for deciding whether Player~$O$ wins the game for some lookahead.

\bibliographystyle{plain}

\bibliography{sample.bib}

\end{document}